\documentclass[conference]{IEEEtran}
\IEEEoverridecommandlockouts

\usepackage{cite}
\usepackage{amsmath,amssymb,amsfonts}
\usepackage{algorithmic}
\usepackage{graphicx}
\usepackage{grffile}
\usepackage{siunitx}
\usepackage{stfloats}
\usepackage{epstopdf}
\usepackage[caption=false,font=normalsize,labelfont=sf,textfont=sf]{subfig}
\usepackage{enumitem}
\usepackage{amsthm}
\newtheorem{theorem}{Theorem}
\newtheorem{lemma}{Lemma}

\usepackage{textcomp}
\usepackage{xcolor}

\def\BibTeX{{\rm B\kern-.05em{\sc i\kern-.025em b}\kern-.08em
    T\kern-.1667em\lower.7ex\hbox{E}\kern-.125emX}}
\usepackage[top=0.74in, bottom=1in, left=0.625in, right=0.625in]{geometry}

\begin{document}

\title{CSI Feedback Under Basis Mismatch: Rate-Splitting Transform Coding for FDD Massive MIMO}

\author{\IEEEauthorblockN{Youngmok Park, Bumsu Park and Namyoon Lee}
\IEEEauthorblockA{\textit{Department of Electrical Engineering,} \\
POSTECH,\\
Pohang, South Korea \\
Email: \{ympark1999, bumsupark, nylee\}@postech.ac.kr}
}

\maketitle

\begin{abstract}
In frequency division duplex massive multiple-input multiple-output systems, downlink  channel state information must be fed back within a limited uplink budget. While transform coding with Karhunen-Lo\`eve transform and reverse water-filling is rate-distortion optimal for Gaussian channels, its performance is limited by basis mismatch between the user and base station. We analyze this mismatch and propose a practical architecture separating long-term basis feedback from short-term coefficient quantization. Using a random vector quantization, we derive a closed-form end-to-end mean square error expression. This allows us to characterize the optimal rate split and identify a phase transition threshold for basis updates. Simulations on correlated Gaussian and COST2100 channels demonstrate near-optimal performance, robustness to update overhead, and significant complexity reduction compared to deep-learning-based autoencoders.
\end{abstract}

\section{Introduction}
Massive multiple-input multiple-output (MIMO) has become a foundational technology for achieving high spectral efficiency, spatial multiplexing, and wide-area coverage in modern wireless systems\cite{marzetta2010noncooperative,5GMag,2014LarssonMassive}. As networks evolve toward 6G, antenna arrays at the base station (BS) are expected to scale even further, giving rise to extremely large-scale MIMO architectures with orders of magnitude more spatial degrees of freedom\cite{6GMag}. While such scaling promises unprecedented performance gains, it also makes multi-user beamforming increasingly sensitive to inaccuracies in the channel state information (CSI) available at the BS\cite{limitedfeedback2008,Jindal2006,Han2024,Kim2025}. This sensitivity creates a critical bottleneck in frequency-division duplex (FDD) systems, where the downlink CSI must be compressed and fed back from the user equipment (UE) over a limited-rate control channel\cite{Lee2014,Caire2006}.

 To reduce this feedback overhead, two main classes of approaches have been widely investigated: compressive sensing (CS) and deep learning (DL)\cite{CSbasedCSI2017,CNN2020,deeplearning2018,Attention_CSI_Feedback_2022,Lightweight_CSI_feedback_2021,Guo2022_Overview,Learningtopitmize_2025,park2024multirate,park2025transformer}. CS-based methods exploit angular-domain sparsity\cite{CSbasedCSI2017,CNN2020}, but rely on strict assumptions and iterative reconstruction, which can degrade performance and incur high latency in rich-scattering environments. Recently, DL-based CSI feedback has emerged as an alternative\cite{deeplearning2018,Attention_CSI_Feedback_2022,Lightweight_CSI_feedback_2021,Guo2022_Overview,Learningtopitmize_2025,park2024multirate,park2025transformer}, utilizing autoencoder architectures to learn non-linear channel structures. While effective in reducing distortion, such approaches often incur prohibitive UE-side computational and memory costs and lack theoretical interpretability of the rate distortion (R-D) trade-off.

In this paper, we address these limitations by revisiting the CSI feedback problem from a fundamental source coding perspective. While practical massive MIMO channels exhibit structured and geometry-dependent characteristics, they are often approximated as correlated complex Gaussian vectors when focusing on second-order statistics. This approximation is physically motivated by the superposition of many independent scattering components at each antenna, for which the complex central limit theorem implies approximately Gaussian small-scale fading. As the number of antennas grows, non-Gaussian effects average out, making the covariance matrix a sufficient descriptor of the underlying propagation geometry. This correlated Gaussian abstraction is commonly adopted in standardized and measurement-based channel models, including the 3GPP SCM \cite{3GPP38901}, COST2100 \cite{COST2100}, and the NYU mmWave measurements \cite{NYUSIM}. For such sources, information theory dictates that the R-D optimum is achieved by transform coding (TC), using the Karhunen-Lo\`eve transform (KLT) and reverse water-filling (RWF) \cite{cover2006elements}.

A critical challenge, however, arises in the practical implementation of TC for FDD systems: the basis mismatch. Unlike classical source coding where the encoder and decoder share the optimal basis, the UE estimates the eigenbasis from local measurements and the BS lacks access to this precise geometry in FDD systems, where reciprocity-based estimation is inherently unavailable. Consequently, the basis itself must be quantized and fed back, consuming a portion of the limited rate budget\cite{limitedfeedback2008}. This introduces a fundamental trade-off between the bits allocated for describing the channel geometry (the basis) and those for the instantaneous fading (the scalar coefficients).

We formulate this as a joint R-D optimization problem under an encoder-decoder mismatch constraint. Our main contributions are: (i) a closed-form end-to-end (E2E) mean square error (MSE) expression that accounts for both coefficient quantization and Grassmannian basis quantization, (ii) an optimal bit split between basis and coefficients in the high-rate regime, and (iii) a phase-transition result that identifies the minimum feedback rate at which allocating bits to basis feedback becomes beneficial.

\section{System Model and Problem Formulation}
We consider a single-cell FDD downlink with a BS having $N_t$ antennas and a single-antenna UE. An OFDM waveform with $N_c$ subcarriers is used. The channel can be represented by a matrix $\mathbf{H} = [\mathbf{h}_1, \cdots, \mathbf{h}_{N_c}] \in \mathbb{C}^{N_t \times N_c}$, where $\mathbf{h}_k\in \mathbb{C}^{N_t}$ denotes the spatial channel at the $k$-th subcarrier. We define the stacked spatial-frequency channel vector as
\begin{equation}
    \bar{\mathbf{h}} = [\mathbf{h}_1^\top, \dots, \mathbf{h}_{N_c}^\top]^\top \in \mathbb{C}^{N},
\end{equation}
where $N = N_c N_t$ represents the total complex dimension of the channel.

\subsection{Channel Statistics and Geometry} 
Massive MIMO channels exhibit structured statistics governed by propagation geometry. Consistent with measurement campaigns and standardized models such as COST2100 and 3GPP SCM \cite{3GPP38901,COST2100}, the downlink channel is commonly modeled as a correlated complex Gaussian random vector:
\begin{equation}
    \bar{\mathbf{h}} \sim \mathcal{CN}(\mathbf{0}, \mathbf{R}),
\end{equation}
where $\mathbf{R} = \mathbb{E}[\bar{\mathbf{h}}\bar{\mathbf{h}}^H]$ is the full spatial-frequency covariance matrix. Crucially, $\mathbf{R}$ encapsulates the essential geometric features of the propagation environment---such as angles of arrival and delay clusters. This geometric structure defines the optimal KLT basis $\mathbf{U} \in \mathbb{C}^{N \times N}$, derived from the eigendecomposition $\mathbf{R} = \mathbf{U} \mathbf{\Lambda} \mathbf{U}^{H}$, where $\mathbf{\Lambda}$ contains the ordered eigenvalues. In practice, the UE obtains this eigenbasis via long-term covariance estimation. For analytical tractability, we treat the resulting basis as an accurate reference KLT basis $\mathbf{U}$. This assumption is justified when the propagation geometry remains locally stationary, so that the channel statistics vary slowly and a single covariance matrix can be reliably estimated from sufficient samples. When the geometry varies across users or time, the channel distribution becomes multi-modal and a Gaussian-mixture model is more appropriate; such an extension is explored in \cite{park2026fundamentallimitscsicompression}. Our focus is on the fundamental impact of feedback compression under basis mismatch, decoupling it from covariance estimation errors.

\subsection{CSI Compression Problem} 
The objective is to encode $\bar{\mathbf{h}}$ into a finite bitstream and reconstruct it at the BS. This forms a source coding problem defined by an encoder-decoder pair $(f, g)$. The encoder maps $\bar{\mathbf{h}}$ to an index $s$, and the decoder reconstructs $\hat{\bar{\mathbf{h}}}=g(s)$, aiming to minimize the MSE distortion $D=\frac{1}{N}\mathbb{E}[\|\bar{\mathbf{h}}-\hat{\bar{\mathbf{h}}}\|^2]$.

Ideally, TC achieves the R-D limit by utilizing the optimal KLT basis $\mathbf{U}$ at both ends. However, in FDD systems, the BS cannot directly access the UE-side eigenbasis $\mathbf{U}$ because the uplink and downlink operate at different carrier frequencies, making reciprocity-based inference unavailable. Consequently, the basis itself must be explicitly quantized into $\hat{\mathbf{U}}$ and fed back. This introduces an unavoidable basis mismatch in the transform chain: the encoder decorrelates using the perfect basis $\mathbf{U}$, whereas the decoder reconstructs using the quantized basis $\hat{\mathbf{U}} \neq \mathbf{U}$. This mismatch projects quantization noise onto a misaligned subspace, creating a unique distortion component that we analyze and optimize in this paper.

\section{CSI Compression using TC \\under Basis Mismatch}

We formulate a practical TC framework tailored for FDD massive MIMO systems. The proposed architecture follows the standard transform coding paradigm---linear decorrelation, bit allocation, scalar quantization, and entropy coding---but incorporates a split-structure basis feedback to address the massive dimension of the channel.

\subsection{Separable Covariance Structure}
To manage the prohibitive dimensionality of the full covariance matrix $\mathbf{R} \in \mathbb{C}^{N \times N}$ ($N=N_c N_t$), we exploit the separable spatial--frequency structure of massive MIMO channels. Assuming independence between spatial and frequency-domain correlations, the covariance matrix can be approximated as $\mathbf{R} \approx \mathbf{R}_s \otimes \mathbf{R}_f$, where $\mathbf{R}_s \in \mathbb{C}^{N_t \times N_t}$ captures the spatial correlation across antennas and $\mathbf{R}_f \in \mathbb{C}^{N_c \times N_c}$ captures the frequency-domain correlation across subcarriers. This yields the basis decomposition, $\mathbf{U} \approx \mathbf{U}_s \otimes \mathbf{U}_f$, where $\mathbf{U}_s,\mathbf{U}_f$ denote the basis of $\mathbf{R}_s,\mathbf{R}_f$, respectively.

\subsection{Long-term CSI: Basis Quantization}
Based on the separable structure, the UE independently quantizes the dominant columns of $\mathbf{U}_s$ and $\mathbf{U}_f$ using Grassmannian codebooks on $G(\cdot,1)$ and applies a re-orthonormalization step to construct unitary matrices $\hat{\mathbf{U}}_s$ and $\hat{\mathbf{U}}_f$. The BS reconstructs the effective basis as $\hat{\mathbf{U}} = \hat{\mathbf{U}}_s \otimes \hat{\mathbf{U}}_f$. While the Kronecker structure reduces the feedback complexity, characterizing the exact distortion induced by column-wise quantization and subsequent re-orthonormalization is analytically intractable. Therefore, for R-D analysis, we adopt a random vector quantization (RVQ) and treat each dominant column of the effective basis as independently quantized on $G(N,1)$. The re-orthonormalization step is used only to enforce unitarity and is not explicitly modeled in the distortion analysis.

\subsection{Short-term CSI: Channel Coefficient Compression}
For each channel realization $\bar{\mathbf{h}}$, the feedback processing follows four stages. First, decorrelation is performed by projecting $\bar{\mathbf{h}}$ onto the true eigenbasis $\mathbf{U}$ to obtain $\tilde{\mathbf{h}} = \mathbf{U}^{H} \bar{\mathbf{h}}$. Second, bit allocation is determined via RWF. Third, active coefficients are quantized using an entropy constrained scalar quantizer (ECSQ), yielding indices $q_m$ and reconstructed values $\hat{\tilde{\mathbf{h}}}$. Finally, entropy coding compresses the indices to meet the target rate.

\subsection{Reconstruction with Mismatch}
The BS decodes the entropy-coded stream to retrieve $\hat{\tilde{\mathbf{h}}}$ and reconstructs the channel using the quantized basis $\hat{\mathbf{U}}$:
\begin{equation} \label{eq:mismatch_recon}
    \hat{\bar{\mathbf{h}}} = \hat{\mathbf{U}} \hat{\tilde{\mathbf{h}}}.
\end{equation}
The critical observation is the mismatch in the transform chain: the decorrelation (Step 1) uses $\mathbf{U}$, while the reconstruction \eqref{eq:mismatch_recon} uses $\hat{\mathbf{U}}$. This implies that the quantization noise in $\hat{\tilde{\mathbf{h}}}$ is projected onto a misaligned subspace, introducing additional distortion beyond standard quantization error.

\section{E2E MSE Analysis}\label{sec:e2e_mse}
We characterize the E2E MSE of the proposed feedback architecture when the decoder reconstructs CSI using a \emph{quantized} eigenbasis. The goal is to express the distortion as a function of the coefficient rate \(R_q\) and the basis rate \(R_0\), which enables joint rate allocation in the sequel.

\subsection{Rate split and distortion decomposition}\label{subsec:rate_split_decomp}
Let \(N\) denote the channel dimension per realization (e.g., \(N=N_cN_t\)), and let \(R_{\mathrm{total}}\) be the feedback budget in \emph{bits per complex dimension per realization}. We split the budget into $R_{\mathrm{total}} = R_0 + R_q$, where \(R_0\) is the \emph{amortized} basis rate used to convey a quantized eigenbasis. Assume the channel covariance eigenbasis \(\mathbf{U}\) is constant over a coherence block of \(\tau\) channel realizations. If the UE quantizes only the \(p\) dominant eigenvectors using \(B_{\mathrm{basis}}\) bits per block, then
\begin{equation}\label{eq:R0_def}
R_0 \triangleq \frac{B_{\mathrm{basis}}}{N\tau}.
\end{equation}
Equivalently, the basis bits per dominant column are \(B = B_{\mathrm{basis}}/p = (R_0N\tau)/p\). In addition, \(R_q\) is the coefficient rate used to quantize the instantaneous transform coefficients in each realization.

Let \(\bar{\mathbf{h}}\in\mathbb{C}^N\) denote the source vector to be fed back, with KLT coefficients \(\tilde{\mathbf{h}}=\mathbf{U}^{H}\bar{\mathbf{h}}\). The encoder sends a quantized coefficient vector \(\hat{\tilde{\mathbf{h}}}\), while the decoder reconstructs using a quantized basis \(\hat{\mathbf{U}}\), i.e., $\hat{\bar{\mathbf{h}}}=\hat{\mathbf{U}}\hat{\tilde{\mathbf{h}}}.$ The E2E distortion (MSE per complex dimension) is
\begin{equation}
D_{\mathrm{E2E}} \triangleq \frac{1}{N}\mathbb{E}\left[\|\bar{\mathbf{h}}-\hat{\bar{\mathbf{h}}}\|_2^2\right].
\end{equation}
Using \(\bar{\mathbf{h}}=\mathbf{U}\tilde{\mathbf{h}}\) and adding/subtracting \(\mathbf{U}\hat{\tilde{\mathbf{h}}}\), we obtain the exact decomposition
\begin{align}\label{eq:decomp_exact}
D_{\mathrm{E2E}}
&= \frac{1}{N}\mathbb{E}\left[\big\|\mathbf{U}(\tilde{\mathbf{h}}-\hat{\tilde{\mathbf{h}}})
+(\mathbf{U}-\hat{\mathbf{U}})\hat{\tilde{\mathbf{h}}}\big\|_2^2\right]\notag\\
&= \underbrace{\frac{1}{N}\mathbb{E}\left[\|\tilde{\mathbf{h}}-\hat{\tilde{\mathbf{h}}}\|_2^2\right]}_{T_1}
+\underbrace{\frac{1}{N}\mathbb{E}\left[\|(\mathbf{U}-\hat{\mathbf{U}})\hat{\tilde{\mathbf{h}}}\|_2^2\right]}_{T_2}
\nonumber\\
&\quad+\underbrace{\frac{2}{N}\Re\mathbb{E}\left[(\tilde{\mathbf{h}}-\hat{\tilde{\mathbf{h}}})^{H}\mathbf{U}^{H}(\mathbf{U}-\hat{\mathbf{U}})\hat{\tilde{\mathbf{h}}}\right]}_{T_3}.
\end{align}
Here \(T_1\) and \(T_2\) represent the coefficient and basis mismatch distortions, which we denote by \(D_q\) and \(D_0\), respectively, while \(T_3\) is a cross-term.

\subsection{Coefficient and basis distortion models}\label{subsec:component_models}
Let \(\mathbf{C}_{\tilde{\mathbf{h}}} = \operatorname{diag}(\lambda_1, \dots, \lambda_N)\) be the covariance of \(\tilde{\mathbf{h}}\) in the true KLT domain (w.l.o.g.\ after ordering eigenvalues). Under the Gaussian test channel (RWF), the coefficient distortion at rate \(R_q\) is given by the standard Gaussian R-D expression \cite{cover2006elements}.

\begin{lemma}[Coefficient distortion]\label{lem:coef_dist}
For coefficient rate \(R_q\) (bits/complex-dimension), the transform-domain distortion is
\begin{equation}\label{eq:Dq_def}
D_q(R_q)=\frac{1}{N}\sum_{m=1}^N \min\!\big(\lambda_m,\mu(R_q)\big),
\end{equation}
where the water level \(\mu(R_q)\) is chosen such that
\begin{equation}\label{eq:waterlevel}
\sum_{m=1}^N \max\!\left(0,\log_2\frac{\lambda_m}{\mu(R_q)}\right)=NR_q.
\end{equation}
\end{lemma}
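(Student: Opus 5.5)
The plan is to reduce Lemma~\ref{lem:coef_dist} to the classical rate-distortion function of parallel independent Gaussian sources, specialised to complex components.

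\textbf{Step 1 (decorrelation).} Since $\bar{\mathbf{h}}\sim\mathcal{CN}(\mathbf{0},\mathbf{R})$ and $\mathbf{U}$ is unitary, the coefficient vector $\tilde{\mathbf{h}}=\mathbf{U}^H\bar{\mathbf{h}}$ satisfies $\tilde{\mathbf{h}}\sim\mathcal{CN}(\mathbf{0},\mathbf{C}_{\tilde{\mathbf{h}}})$ with $\mathbf{C}_{\tilde{\mathbf{h}}}=\operatorname{diag}(\lambda_1,\dots,\lambda_N)$. Its components $\tilde h_m\sim\mathcal{CN}(0,\lambda_m)$ are therefore independent, because they are jointly circularly symmetric Gaussian and uncorrelated. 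The distortion $T_1=\frac1N\mathbb{E}\|\tilde{\mathbf{h}}-\hat{\tilde{\mathbf{h}}}\|^2$ is additive over the components.

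\textbf{Step 2 (scalar complex Gaussian R-D).} For one component $\tilde h_m\sim\mathcal{CN}(0,\lambda_m)$ under squared error, I would first show $R_m(d)=\max(0,\log_2(\lambda_m/d))$ bits per complex dimension. The converse uses $I(\tilde h_m;\hat{\tilde h}_m)\ge h(\tilde h_m)-h(\tilde h_m-\hat{\tilde h}_m)$ together with the maximum-entropy bound $h(e)\le\log_2(\pi e\,d)$ for a complex variable of second moment $d$. Achievability uses the backward test channel $\tilde h_m=\hat{\tilde h}_m+z_m$, where $\hat{\tilde h}_m\sim\mathcal{CN}(0,\lambda_m-d)$ is independent of $z_m\sim\mathcal{CN}(0,d)$. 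Equivalently, one treats the real and imaginary parts as two real Gaussians, each of variance $\lambda_m/2$, each costing $\tfrac12\log_2$, for a total of $\log_2$.

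\textbf{Step 3 (reverse water-filling).} I then minimise $\sum_m d_m$ subject to $\sum_m\max(0,\log_2(\lambda_m/d_m))\le NR_q$ and $0\le d_m\le\lambda_m$. The problem is convex in $(d_m)$, so KKT conditions apply. They give $d_m=\mu$ whenever $\lambda_m>\mu$, and $d_m=\lambda_m$ otherwise (those components are discarded at zero rate). The common multiplier $\mu$ is fixed by making the rate constraint tight, which is exactly \eqref{eq:waterlevel}. Substituting back gives $D_q(R_q)=\frac1N\sum_m\min(\lambda_m,\mu)$.

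\textbf{Step 4 (well-posedness).} The left side of \eqref{eq:waterlevel} is continuous and strictly decreasing in $\mu$ on $(0,\lambda_1]$. It goes to $\infty$ as $\mu\to0$ and equals $0$ at $\mu=\lambda_1$, so $\mu(R_q)$ exists and is unique for every $R_q>0$.

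\textbf{Main obstacle.} The delicate point is the interpretation rather than the optimisation. The lemma states the Gaussian R-D bound, i.e.\ the test-channel idealisation of the ECSQ-plus-entropy-coding pipeline, not the exact performance of the practical quantizer. I would state explicitly that $D_q$ is defined under the Gaussian test channel, as the preceding text already assumes, and cite \cite{cover2006elements} for the parallel-Gaussian theorem. At high rate, ECSQ is known to lose only a constant $\pi e/6$ factor in distortion (about $1.53$ dB) relative to this bound.
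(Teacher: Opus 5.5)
Your proposal is correct and takes the same route as the paper. The paper gives no separate proof: it invokes the classical reverse water-filling result for parallel Gaussian sources from \cite{cover2006elements} under the Gaussian test-channel idealization, and your Steps 1--4 unpack that theorem for circularly symmetric complex components, with your step reducing the vector problem to the per-component allocation in Step 3 (via $I(\tilde{\mathbf{h}};\hat{\tilde{\mathbf{h}}})\ge\sum_m I(\tilde h_m;\hat{\tilde h}_m)$ for independent $\tilde h_m$) left to that same citation.
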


For the basis error, we model quantization of the \(p\) dominant eigenvectors via independent RVQ on the Grassmann manifold.

\begin{lemma}[RVQ basis error scaling]\label{lem:rvq_scaling}
Let \(\mathbf{u}\in\mathbb{C}^N\) be a unit-norm vector and \(\hat{\mathbf{u}}\) its RVQ quantization using \(B\) bits on \(G(N,1)\). Then the expected squared chordal distortion satisfies \cite{Jindal2006,love2003Grassmanian}
\begin{equation}\label{eq:rvq_chordal}
\mathbb{E}\left[d_c^2(\mathbf{u},\hat{\mathbf{u}})\right]\doteq c_N\,2^{-B/(N-1)},
\end{equation}
where \(c_N\) is a dimension-dependent constant (often treated as \(\Theta(1)\) for fixed \(N\)).
\end{lemma}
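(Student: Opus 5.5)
We prove the RVQ scaling by computing the distribution of the chordal distance $d_c^2(\mathbf{u},\hat{\mathbf{u}}) = 1-|\mathbf{u}^H\hat{\mathbf{u}}|^2$ under a random codebook of $2^B$ i.i.d.\ isotropic unit vectors, and then taking the expectation of its minimum. The key fact is that for a single isotropic $\mathbf{w}\in\mathbb{C}^N$ independent of $\mathbf{u}$, the quantity $|\mathbf{u}^H\mathbf{w}|^2$ is $\mathrm{Beta}(1,N-1)$. This follows because $\mathbf{w}=\mathbf{g}/\|\mathbf{g}\|$ with $\mathbf{g}\sim\mathcal{CN}(\mathbf{0},\mathbf{I})$, and $|\mathbf{u}^H\mathbf{g}|^2$ and $\|\mathbf{g}\|^2-|\mathbf{u}^H\mathbf{g}|^2$ are independent $\mathrm{Gamma}(1)$ and $\mathrm{Gamma}(N-1)$ variables. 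Hence $Z=1-|\mathbf{u}^H\mathbf{w}|^2$ satisfies $\Pr(Z\le z)=z^{N-1}$ for $z\in[0,1]$. By unitary invariance this does not depend on $\mathbf{u}$.

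Let $K=2^B$. The quantizer picks the codeword with the smallest $Z_i$, and the $Z_i$ are i.i.d. Therefore
\begin{equation*}
\Pr\big(d_c^2>z\big)=(1-z^{N-1})^{K},
\end{equation*}
and
\begin{equation*}
\mathbb{E}[d_c^2]=\int_0^1 (1-z^{N-1})^{K}\,dz .
\end{equation*}
The substitution $t=z^{N-1}$ turns this into a Beta integral:
\begin{equation*}
\mathbb{E}[d_c^2]=\frac{1}{N-1}\int_0^1 t^{\frac{1}{N-1}-1}(1-t)^K\,dt
= \frac{\Gamma(\tfrac{1}{N-1})}{N-1}\cdot\frac{\Gamma(K+1)}{\Gamma(K+1+\tfrac{1}{N-1})}
= K\,\mathrm{B}\!\left(K,\tfrac{N}{N-1}\right).
\end{equation*}
This is the exact expression familiar from \cite{Jindal2006}.

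It remains to extract the exponential rate. Stirling's formula gives $\Gamma(K+1)/\Gamma(K+1+a)\sim K^{-a}$ as $K\to\infty$, with $a=1/(N-1)$. Hence
\begin{equation*}
\mathbb{E}[d_c^2]\doteq \Gamma\!\left(\tfrac{N}{N-1}\right)K^{-1/(N-1)} = c_N\,2^{-B/(N-1)},
\end{equation*}
with $c_N=\Gamma(N/(N-1))\in(0.88,1]$, which is $\Theta(1)$. To turn this into a non-asymptotic sandwich, we can use the standard bounds $(K+1)^{-a}\le \Gamma(K+1)/\Gamma(K+1+a)\le K^{-a}$, valid for $a\in(0,1]$ by Gautschi's inequality. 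Alternatively, we can bound the integral directly: use $(1-t)^K\le e^{-Kt}$ for the upper bound, and truncate at $t\le 1/K$ for the lower bound. Either route gives $\mathbb{E}[d_c^2]=\Theta(2^{-B/(N-1)})$. This justifies reading $\doteq$ as equality up to a $B$-independent constant, or in exponent.

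The step needing the most care is the distributional claim for $|\mathbf{u}^H\mathbf{w}|^2$, namely the Gamma/Beta independence argument. The asymptotic step also needs care, since $\doteq$ is only meaningful as $B$ grows with $N$ fixed, so we will state explicitly which regime is meant. Everything else is routine calculus.
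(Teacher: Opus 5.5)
Your argument is correct. It differs from the paper mainly in that the paper gives no proof of this lemma at all: it simply imports the scaling from \cite{Jindal2006,love2003Grassmanian}. You derive it from first principles: the $\mathrm{Beta}(N-1,1)$ law of $1-|\mathbf{u}^H\mathbf{w}|^2$, the order-statistic tail $(1-z^{N-1})^{2^B}$, the exact value $2^B\,\mathrm{B}(2^B,\tfrac{N}{N-1})$, and then Gamma-ratio bounds.

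This buys two things the citation does not. First, you get the non-asymptotic sandwich $\Gamma(\tfrac{N}{N-1})(2^B+1)^{-1/(N-1)}\le\mathbb{E}[d_c^2]\le\Gamma(\tfrac{N}{N-1})\,2^{-B/(N-1)}$. It makes the loose $\doteq$ precise (the ratio to $c_N2^{-B/(N-1)}$ tends to $1$) and yields Jindal's bound $2^{-B/(N-1)}$ as a corollary. A small attribution point: the lower half comes from Wendel's inequality or log-convexity of $\Gamma$, not from Gautschi's inequality as you cite it, but the bound holds. Second, you get the explicit constant $c_N=\Gamma(\tfrac{N}{N-1})$.

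That constant is worth contrasting with the paper's ``standard normalization'' in Theorem~\ref{thm:e2e_basis_mismatch}, $\alpha_0=(N-1)/N^2$, i.e.\ $c_N=(N-1)/N$. This is the quantization-cell constant. A union bound $\Pr(d_c^2\le z)\le 2^B z^{N-1}$ shows $\tfrac{N-1}{N}2^{-B/(N-1)}$ is a lower bound for \emph{any} $2^B$-point codebook, and genuine RVQ does not attain it (e.g.\ $c_N=1$ versus $1/2$ at $N=2$). The lemma only asserts some $\Theta(1)$ constant, so its validity is unaffected. Still, your computation is the one to use if $\alpha_0$ is meant to describe RVQ. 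The citation route, in exchange, is shorter and connects to structured Grassmannian codebooks, for which the quantization-cell constant is the natural benchmark.
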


\subsection{Main result}\label{subsec:main_theorem}
We now state and prove a closed-form E2E distortion characterization that explicitly reveals the trade-off between allocating bits to coefficients versus to the basis.

\begin{theorem}[E2E distortion under basis mismatch]\label{thm:e2e_basis_mismatch}
Assume: (i) coefficient quantization uses subtractive dithering (or an equivalent high-rate model) so that the quantization error is conditionally unbiased and uncorrelated with the input \cite{Gray1998quantize}; (ii) basis quantization is performed once per coherence block and is independent of the instantaneous coefficient quantization; and (iii) the \(p\) dominant eigenvectors are quantized independently via RVQ with \(B=(R_0N\tau)/p\) bits per column. Then the E2E distortion satisfies
\begin{equation}\label{eq:thm_e2e}
D_{\mathrm{E2E}}(R_q,R_0) = D_q(R_q) + D_0(R_0) + o(1),
\end{equation}
where \(D_q(R_q)\) is given in \eqref{eq:Dq_def}, \(o(1)\) captures the vanishing cross-term in the high-rate regime, and the basis mismatch term admits the approximation
\begin{equation}\label{eq:D0_final}
D_0(R_0)\doteq \alpha_0\!\left(\sum_{m=1}^p \lambda_m\right)2^{-\beta_0 R_0},\quad \beta_0=\frac{N\tau}{p(N-1)},
\end{equation}
with \(\alpha_0 = \frac{c_N}{N}\) (and in particular \(\alpha_0=\frac{N-1}{N^2}\) under the standard normalization used in \cite{Jindal2006}).
\end{theorem}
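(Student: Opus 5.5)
The plan is to work directly from the exact decomposition \eqref{eq:decomp_exact} and treat the three terms $T_1$, $T_2$, $T_3$ in turn.

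\textbf{Coefficient term.} $T_1$ needs no new argument. Since $\mathbf{U}$ is unitary, $T_1$ is exactly the transform-domain MSE of the coefficient quantizer. Under assumption (i) and the Gaussian test channel with RWF, Lemma~\ref{lem:coef_dist} gives $T_1=D_q(R_q)$.

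\textbf{Cross-term.} I will condition on the block-level basis estimate $\hat{\mathbf{U}}$, which is legitimate by assumption (ii), and write $\mathbf{e}=\tilde{\mathbf{h}}-\hat{\tilde{\mathbf{h}}}$. The cross-term becomes
\[
\tfrac{2}{N}\Re\,\mathbb{E}\!\left[\mathbf{e}^H\mathbf{M}\hat{\tilde{\mathbf{h}}}\right],\qquad \mathbf{M}=\mathbf{I}-\mathbf{U}^H\hat{\mathbf{U}},
\]
with $\mathbf{M}$ deterministic given the block. Under subtractive dithering the error $\mathbf{e}$ is zero-mean and uncorrelated with $\tilde{\mathbf{h}}$, so $\mathbb{E}[\mathbf{e}\,\hat{\tilde{\mathbf{h}}}^H]=-\mathbb{E}[\mathbf{e}\mathbf{e}^H]$. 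The cross-term therefore reduces to
\[
-\tfrac{2}{N}\Re\operatorname{tr}\!\left(\mathbf{M}\,\mathbb{E}[\mathbf{e}\mathbf{e}^H]\right).
\]
By Cauchy--Schwarz this is bounded by $\tfrac{2}{N}\|\mathbf{M}\|\,\operatorname{tr}\mathbb{E}[\mathbf{e}\mathbf{e}^H]$. This is $O\big(\sqrt{D_0}\cdot D_q\big)$-type, a product of two vanishing quantities in the high-rate regime, hence $o(1)$ relative to the leading terms. Note that $\mathbf{e}$ vanishes on inactive components because those are reconstructed as zero.

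\textbf{Basis term.} For $T_2$, only the $p$ active dominant coefficients matter, since $\hat{\tilde{h}}_m=0$ (or is negligible) for $m>p$. Expanding, I write
\[
T_2=\tfrac1N\sum_{m,m'\le p}\mathbb{E}\!\left[\hat{\tilde h}_m^*\hat{\tilde h}_{m'}\right](\mathbf{u}_m-\hat{\mathbf{u}}_m)^H(\mathbf{u}_{m'}-\hat{\mathbf{u}}_{m'}).
\]
At high rate $\mathbb{E}[\hat{\tilde h}_m^*\hat{\tilde h}_{m'}]\approx\lambda_m\delta_{mm'}$, so the off-diagonal terms drop and $T_2\approx\frac1N\sum_{m\le p}\lambda_m\,\mathbb{E}\|\mathbf{u}_m-\hat{\mathbf{u}}_m\|^2$. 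Choosing the phase of $\hat{\mathbf{u}}_m$ to align with $\mathbf{u}_m$ (phase being irrelevant on $G(N,1)$ and absorbed into the coefficient), the Euclidean error is of the order of the chordal distortion $d_c^2$. By Lemma~\ref{lem:rvq_scaling} it equals $c_N2^{-B/(N-1)}$ up to constants absorbed in $c_N$. Substituting $B=R_0N\tau/p$ gives the exponent $\beta_0R_0$ and $\alpha_0=c_N/N$. Taking $c_N=(N-1)/N$, the RVQ high-resolution constant of \cite{Jindal2006}, gives the stated $\alpha_0$.

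\textbf{Main obstacle.} I expect the hardest step to be justifying the passage from the Euclidean error $\|\mathbf{u}-\hat{\mathbf{u}}\|^2$ to the chordal distance $d_c^2$. This needs the phase-alignment convention, and the identification $\|\mathbf{u}-\hat{\mathbf{u}}\|^2=2(1-|\mathbf{u}^H\hat{\mathbf{u}}|)\approx d_c^2$ holds only to first order when $d_c$ is small. A second delicate point is neglecting the re-orthonormalization coupling between columns. Both are handled by stating the result with $\doteq$, meaning exponential-order equivalence, and absorbing constant factors into $c_N$.
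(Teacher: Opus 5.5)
Your argument has the same skeleton as the paper's proof. Both use the exact split into $T_1,T_2,T_3$; unitarity plus Lemma~\ref{lem:coef_dist} for $T_1$; independence of the block-level basis error from the instantaneous quantization for $T_3$; and, for $T_2$, a diagonal reduction over the $p$ quantized columns, phase alignment to pass from $\|\mathbf{u}-\hat{\mathbf{u}}\|_2^2$ to $d_c^2$, Lemma~\ref{lem:rvq_scaling}, and $B=R_0N\tau/p$.

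Where you depart from the paper, your version is the more careful one.

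\textbf{Cross-term.} The paper pulls $\mathbf{M}\hat{\tilde{\mathbf{h}}}$ outside $\mathbb{E}[\,\cdot\mid\tilde{\mathbf{h}},\mathbf{M}]$ and concludes $T_3=0$. But $\hat{\tilde{\mathbf{h}}}=\tilde{\mathbf{h}}-\mathbf{e}_q$ is not a function of $(\tilde{\mathbf{h}},\mathbf{M})$, so that step does not hold. Under the forward (dithered) model of assumption (i) one gets exactly your $T_3=-\frac{2}{N}\Re\operatorname{tr}(\mathbb{E}[\mathbf{M}]\,\mathbb{E}[\mathbf{e}\mathbf{e}^H])$. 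This is nonzero but of product order, which is precisely what the $o(1)$ in \eqref{eq:thm_e2e} needs. With independent per-coordinate dithers only the diagonal entries $M_{mm}=1-|\mathbf{u}_m^H\hat{\mathbf{u}}_m|\approx d_c^2/2$ survive, so the term is even $O(D_q\,d_c^2)$.

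\textbf{Second moments in $T_2$.} The paper bounds $\mathbb{E}|\hat{\tilde h}_m|^2\le\lambda_m$. That holds for the backward Gaussian test channel $\tilde{\mathbf{h}}=\hat{\tilde{\mathbf{h}}}+\mathbf{z}$, under which $T_3=0$ is also exact. Under dithering the inequality is reversed, since $\mathbb{E}|\hat{\tilde h}_m|^2=\lambda_m+\mathbb{E}|e_m|^2$. Your high-rate $\mathbb{E}[\hat{\tilde h}_m^*\hat{\tilde h}_{m'}]\approx\lambda_m\delta_{mm'}$ is valid under either model. It also removes the off-diagonal terms through decorrelation of the coefficients given $\hat{\mathbf{U}}$, rather than through the paper's appeal to isotropy and cross-column independence of RVQ.

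\textbf{Euclidean versus chordal error.} This step is less delicate than you fear. With phase alignment, $\|\mathbf{u}-\hat{\mathbf{u}}\|_2^2=\frac{2}{1+|\mathbf{u}^H\hat{\mathbf{u}}|}\,d_c^2\in[d_c^2,2d_c^2]$ holds exactly, not just to first order.

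\textbf{One slip to correct.} An inactive coordinate reconstructed as zero has $e_m=\tilde h_m$, not $e_m=0$, and there the error is fully correlated with the input. Your identity survives only because such coordinates are independent of the active ones and multiply $\hat{\tilde h}_m=0$. So state $\mathbb{E}[\hat{\tilde{\mathbf{h}}}\mathbf{e}^H]=-\mathbb{E}[\mathbf{e}\mathbf{e}^H]$ on the active block only; after that, the bound $2\,\mathbb{E}\|\mathbf{M}\|\,D_q$ goes through unchanged.
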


\begin{proof}
Start from the exact decomposition \eqref{eq:decomp_exact}. Since \(\mathbf{U}\) is unitary, \(\|\mathbf{U}\mathbf{x}\|_2=\|\mathbf{x}\|_2\) and thus
\[
T_1=\frac{1}{N}\mathbb{E}\left[\|\tilde{\mathbf{h}}-\hat{\tilde{\mathbf{h}}}\|_2^2\right]=D_q(R_q),
\]
where \(D_q(R_q)\) is given by Lemma~\ref{lem:coef_dist}.

Let \(\mathbf{e}_q\triangleq \tilde{\mathbf{h}}-\hat{\tilde{\mathbf{h}}}\) and \(\mathbf{M}\triangleq \mathbf{U}^{H}(\mathbf{U}-\hat{\mathbf{U}})\). Then
\[
T_3=\frac{2}{N}\Re\,\mathbb{E}\left[\mathbf{e}_q^{H}\mathbf{M}\hat{\tilde{\mathbf{h}}}\right].
\]
Under subtractive dithering (or a high-rate model), \(\mathbb{E}[\mathbf{e}_q\mid\tilde{\mathbf{h}}]=\mathbf{0}\) and \(\mathbf{e}_q\) is uncorrelated with \(\tilde{\mathbf{h}}\) \cite{Gray1998quantize}. Moreover, by time-scale separation, the basis error \(\mathbf{M}\) is a function of the (long-term) basis quantization and is independent of the instantaneous coefficient quantization, hence independent of \(\mathbf{e}_q\). Therefore,
\[
\mathbb{E}\left[\mathbf{e}_q^{H}\mathbf{M}\hat{\tilde{\mathbf{h}}}\right]
=\mathbb{E}\left[\mathbb{E}[\mathbf{e}_q^{H}\mid \tilde{\mathbf{h}},\mathbf{M}]\;\mathbf{M}\hat{\tilde{\mathbf{h}}}\right]
=\mathbb{E}\left[\mathbf{0}^{H}\mathbf{M}\hat{\tilde{\mathbf{h}}}\right]=0,
\]
which implies \(T_3=o(1)\) (and is exactly zero under the ideal dithered model).

We write $T_2 = \frac{1}{N}\mathbb{E}[\hat{\tilde{\mathbf{h}}}^{H}(\mathbf{U}-\hat{\mathbf{U}})^{H}(\mathbf{U}-\hat{\mathbf{U}})\hat{\tilde{\mathbf{h}}}]$. Assume only the first \(p\) eigenvectors are quantized (the remaining columns are either not used or treated as perfectly known). Then \(\mathbf{U}-\hat{\mathbf{U}}\) has nonzero columns only in \(\{1,\ldots,p\}\), and expanding the quadratic form yields
\[
T_2=\frac{1}{N}\sum_{m=1}^p \mathbb{E}\left[|\hat{\tilde h}_m|^2\,\|\mathbf{u}_m-\hat{\mathbf{u}}_m\|_2^2\right] +\text{(cross-column terms)}.
\]
Under independent RVQ across columns and isotropy, the cross-column terms vanish in expectation (or are negligible compared to the dominant diagonal terms), giving
\[
T_2 \doteq \frac{1}{N}\sum_{m=1}^p \mathbb{E}\left[|\hat{\tilde h}_m|^2\right]\mathbb{E}\left[\|\mathbf{u}_m-\hat{\mathbf{u}}_m\|_2^2\right].
\]
Moreover, \(\mathbb{E}[|\hat{\tilde h}_m|^2]\le \mathbb{E}[|\tilde h_m|^2]=\lambda_m\), so
\[
T_2 \le \frac{1}{N}\left(\sum_{m=1}^p \lambda_m\right)\mathbb{E}\left[\|\mathbf{u}-\hat{\mathbf{u}}\|_2^2\right],
\]
where \((\mathbf{u},\hat{\mathbf{u}})\) denotes a generic dominant eigenvector and its RVQ quantization. For small quantization error, \(\|\mathbf{u}-\hat{\mathbf{u}}\|_2^2 \doteq d_c^2(\mathbf{u},\hat{\mathbf{u}})\) (up to a dimension-dependent constant due to phase alignment on \(G(N,1)\)), hence by Lemma~\ref{lem:rvq_scaling},
\[
\mathbb{E}\left[\|\mathbf{u}-\hat{\mathbf{u}}\|_2^2\right]\doteq c_N\,2^{-B/(N-1)}.
\]
Substituting \(B=(R_0N\tau)/p\) yields
\[
T_2 \doteq \frac{c_N}{N}\left(\sum_{m=1}^p \lambda_m\right)2^{-\frac{R_0N\tau}{p(N-1)}} = \alpha_0\left(\sum_{m=1}^p \lambda_m\right)2^{-\beta_0R_0},
\]
which is exactly \eqref{eq:D0_final}. This completes the proof.
\end{proof}

Theorem~\ref{thm:e2e_basis_mismatch} cleanly separates short-term and long-term errors: \(D_q(R_q)\) is governed by reverse water-filling over eigenmodes, while \(D_0(R_0)\) decays exponentially with the amortized basis rate. This structure makes the optimal rate split \((R_q,R_0)\) analytically tractable. The closed-form expression relies on the RVQ model 
and the high-rate dithering assumption; extending the 
characterization to low-rate regimes and structured 
codebooks is left for future work.

\section{Optimal Rate Allocation\\ and Phase Transition Analysis}

Having characterized the E2E distortion $D_{E2E}(R_q, R_0)$ in Theorem 1, we now determine the optimal split of the total budget $R_{total}$ between the coefficient rate $R_q$ and the amortized basis rate $R_0$. The optimization problem is formulated as minimizing $D_{E2E} = D_q(R_q) + D_0(R_0)$ subject to $R_q + R_0 \le R_{total}$. Since both distortion terms are non-increasing in their respective rates and are piecewise smooth under a fixed active set of RWF, the problem admits a well-defined optimum.

\begin{theorem}[Optimal Rate Split Condition]
For a sufficiently large total rate $R_{total}$, the optimal rate allocation $(R_q^*, R_0^*)$ satisfies the condition where the marginal reduction in coefficient distortion equals the marginal reduction in basis mismatch distortion:
\begin{equation} \label{eq:opt_condition}
    \mu(R_q^*) = \beta_0 D_0(R_0^*),
\end{equation}
where $\beta_0 = \frac{N \tau}{p(N-1)}$ is the basis rate scaling factor defined in Theorem 1.
\end{theorem}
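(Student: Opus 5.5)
The plan is to treat the problem as a one-dimensional convex program. Substitute $R_0 = R_{\mathrm{total}} - R_q$ (the budget constraint is active at the optimum because both $D_q$ and $D_0$ are non-increasing). Then minimize $F(R_q) = D_q(R_q) + D_0(R_{\mathrm{total}}-R_q)$ over $R_q\in[0,R_{\mathrm{total}}]$. The stationarity condition of $F$, or equivalently the KKT condition with multiplier $\nu$ for $R_q+R_0\le R_{\mathrm{total}}$, should reduce exactly to \eqref{eq:opt_condition}.

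\textbf{Step 1: the derivative of $D_q$.} I would differentiate $D_q$ via the reverse water-filling characterization of Lemma~\ref{lem:coef_dist}. Fix $R_q$ in an interval on which the active set $\mathcal{A}=\{m:\lambda_m>\mu\}$, of size $k$, is constant. On that interval \eqref{eq:waterlevel} gives
\[
\log_2\mu(R_q)=\frac{1}{k}\Big(\sum_{m\in\mathcal{A}}\log_2\lambda_m - NR_q\Big),
\]
so $\mu'(R_q)=-(N/k)\,\mu\ln 2$. Since $D_q=\frac{1}{N}\big(k\mu+\sum_{m\notin\mathcal{A}}\lambda_m\big)$, this yields $D_q'(R_q)=-\mu(R_q)\ln 2$, independently of $k$.

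Because $\mu(R_q)$ is continuous in $R_q$ (an eigenvalue enters the active set exactly when $\mu$ crosses it), this formula extends across the breakpoints. So $D_q$ is $C^1$ with $D_q'=-\mu\ln2$. Moreover $D_q'$ is nondecreasing, since $\mu$ is decreasing, and hence $D_q$ is convex.

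\textbf{Step 2: the derivative of $D_0$.} From \eqref{eq:D0_final} in Theorem~\ref{thm:e2e_basis_mismatch},
\[
D_0'(R_0)=-\beta_0\ln 2\,D_0(R_0),
\]
and $D_0$ is clearly convex. Hence $F$ is convex and differentiable, with
\[
F'(R_q)=\ln 2\,\big(\beta_0 D_0(R_{\mathrm{total}}-R_q)-\mu(R_q)\big).
\]
Any interior zero of $F'$ is the global minimizer, and setting $F'=0$ gives $\mu(R_q^*)=\beta_0D_0(R_0^*)$. In KKT language, both marginal rates equal $\nu/\ln 2$, which is the ``equal marginal returns'' interpretation in the statement.

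\textbf{Step 3: the optimum is interior.} The main obstacle is showing that the minimizer is interior, i.e.\ $0<R_q^*<R_{\mathrm{total}}$, rather than sitting at a corner. This is where ``sufficiently large $R_{\mathrm{total}}$'' is used.

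I would examine the sign of $F'$ at the two endpoints.

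At $R_q=0$ we have $\mu(0)=\lambda_1$. We also have $\beta_0D_0(R_{\mathrm{total}})\to0$ as $R_{\mathrm{total}}\to\infty$. So $F'(0)<0$ once $R_{\mathrm{total}}$ is large.

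At $R_q=R_{\mathrm{total}}$ (so $R_0=0$), $\mu(R_{\mathrm{total}})\to 0$ as $R_{\mathrm{total}}\to\infty$. Meanwhile $\beta_0D_0(0)=\beta_0\alpha_0\sum_{m\le p}\lambda_m>0$ is a fixed constant, so $F'(R_{\mathrm{total}})>0$.

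Since $F'$ is continuous and strictly increasing (both terms are strictly monotone), the intermediate value theorem gives a unique interior root. That root is the optimum.

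Two caveats apply. I would remark that the high-rate regime also makes all eigenmodes active, so locally $\mu=2^{-R_q}\big(\prod_m\lambda_m\big)^{1/N}$. This gives an explicit closed form for the split, useful for the subsequent phase-transition discussion. I would also note that the argument treats the $\doteq$ in \eqref{eq:D0_final} as an equality for the model being optimized.
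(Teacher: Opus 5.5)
Your proposal is correct and follows the same route as the paper, which simply equates the slopes $\partial D_q/\partial R_q=-(\ln 2)\mu$ and $\partial D_0/\partial R_0=-(\ln 2)\beta_0 D_0(R_0)$. Your additions supply justification the paper leaves implicit: the derivation of the reverse water-filling slope and its continuity across active-set breakpoints, the convexity of $F$ (so the stationary point is the global optimum), and the endpoint sign check on $F'$ (so the optimum is interior, which is what ``sufficiently large $R_{\mathrm{total}}$'' means and matches the threshold $R_{th}$ of the next theorem).
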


\begin{proof}
The optimal solution must satisfy the condition that the magnitude of the gradients with respect to the rates are equal: $|\frac{\partial D_q}{\partial R_q}| = |\frac{\partial D_0}{\partial R_0}|$. For a fixed active set of RWF, standard high-rate arguments give the local slope $\frac{\partial D_q}{\partial R_q} = -(\ln 2)\mu$. Similarly, differentiating the exponential basis distortion term $D_0(R_0) \propto 2^{-\beta_0 R_0}$ yields $\frac{\partial D_0}{\partial R_0} = -(\ln 2)\beta_0 D_0(R_0)$, which completes the proof.
\end{proof}

Theorem 2 characterizes the optimal operating point by equating the marginal distortion reductions achieved by allocating an additional bit to coefficient quantization and to basis feedback. In other words, the optimal split satisfies $\mu(R_q^*)=\beta_0 D_0(R_0^*)$, which directly leads to a phase-transition behavior in the rate allocation.

\begin{theorem}[Phase Transition Threshold]
There exists a critical rate threshold $R_{th}$ that governs the feedback strategy:
\begin{itemize}
    \item \textbf{Inactive Regime ($R_{total} \le R_{th}$):} It is optimal to allocate all bits to coefficients ($R_0^* = 0, R_q^* = R_{total}$).
    \item \textbf{Active Regime ($R_{total} > R_{th}$):} It is optimal to update the basis ($R_0^* > 0$), with rates satisfying \eqref{eq:opt_condition}.
\end{itemize}
The threshold is derived by evaluating the optimality condition at the boundary $R_0=0$:
\begin{equation} \label{eq:threshold}
    R_{th} \approx \frac{K}{N} \log_2 \left( \frac{(\prod_{m \in \mathcal{S}} \lambda_m)^{1/K}}{\beta_0 D_0(0)} \right),
\end{equation}
where $\mathcal{S}$ is the active set at rate $R_{th}$, and $K=|\mathcal{S}|$ is the number of active eigenmodes.
\end{theorem}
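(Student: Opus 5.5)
We reduce the problem to one variable by substituting $R_q=R_{total}-R_0$. The objective becomes
\[
J(R_0)=D_q(R_{total}-R_0)+D_0(R_0), \qquad R_0\in[0,R_{total}].
\]
By Theorem~\ref{thm:e2e_basis_mismatch} this is the relevant objective, since the cross-term is negligible at high rate. The budget constraint is tight at the optimum because both terms are non-increasing.

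We then study the one-sided derivative of $J$ at $R_0=0$. On a fixed RWF active set, Theorem~2 gives $\partial D_q/\partial R_q=-(\ln 2)\mu$ and $\partial D_0/\partial R_0=-(\ln 2)\beta_0 D_0(R_0)$. Hence
\[
J'(R_0)=(\ln 2)\big[\mu(R_{total}-R_0)-\beta_0 D_0(R_0)\big].
\]
The sign of $J'$ therefore compares the water level with the scaled basis distortion.

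The key monotonicity facts are these:
\begin{itemize}
\item The water level $\mu(R_q)$ is continuous and strictly decreasing in $R_q$, which follows from \eqref{eq:waterlevel}. So $R_0\mapsto\mu(R_{total}-R_0)$ is increasing.
\item $\beta_0D_0(R_0)$ is strictly decreasing in $R_0$.
\end{itemize}
Consequently $J'$ is increasing, and $J$ is convex on $[0,R_{total}]$. Convexity holds on each active-set piece, and $\mu$ is continuous across pieces, so $J'$ is continuous and nondecreasing throughout.

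The inactive/active dichotomy follows from convexity.
\begin{itemize}
\item If $J'(0)\ge 0$, i.e.\ $\mu(R_{total})\ge\beta_0D_0(0)$, the minimizer is $R_0^*=0$.
\item Otherwise the minimizer is interior, with $J'=0$, which is exactly \eqref{eq:opt_condition}. For large $R_{total}$ we have $J'(R_{total})>0$, since $\mu(0)=\lambda_1$ exceeds $\beta_0 D_0(R_{total})$, which is small; this rules out the right endpoint.
\end{itemize}
Since $\mu(R_{total})$ decreases monotonically to $0$ while $\beta_0D_0(0)$ is a fixed positive constant, there is a unique crossing $R_{th}$ defined by $\mu(R_{th})=\beta_0D_0(0)$. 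Below $R_{th}$ the inequality $\mu\ge\beta_0D_0(0)$ holds, and above it the inequality fails. This proves both regimes.

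For the explicit formula, we invert the water level on the active set $\mathcal S$ at $R_{th}$, with $K=|\mathcal S|$. Equation \eqref{eq:waterlevel} reads $\sum_{m\in\mathcal S}\log_2(\lambda_m/\mu)=NR$. This gives
\[
\mu(R)=\Big(\prod_{m\in\mathcal S}\lambda_m\Big)^{1/K}2^{-NR/K}.
\]
Setting $\mu(R_{th})=\beta_0D_0(0)$ and solving for $R_{th}$ yields \eqref{eq:threshold}.

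The main obstacle is the self-consistency of $\mathcal S$. The active set depends on $R_{th}$ itself, so the formula is implicit. It is valid only when the resulting $\mu$ lies in $[\lambda_{K+1},\lambda_K)$. I would handle this by noting that the map $R\mapsto\mu(R)$ is piecewise given by the product formula on each set $\{1,\dots,K\}$, and choosing the unique $K$ for which $\beta_0D_0(0)\in[\lambda_{K+1},\lambda_K)$.

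A second caveat is that the "$\approx$" reflects two approximations: $D_0$ is itself only an asymptotic ($\doteq$) expression, and $D_0(0)=\alpha_0\sum_{m\le p}\lambda_m$ extrapolates the high-rate RVQ model down to zero rate. I would state the threshold as exact for the model $J$ and approximate only in this sense.
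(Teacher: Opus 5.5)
Your proposal follows the paper's own route: the sign of the one-sided derivative at $R_0=0$, the boundary condition $\mu(R_{th})=\beta_0D_0(0)$, and inversion of the water level on a fixed active set. Your convexity/monotonicity argument and your choice of $K$ via $\beta_0D_0(0)\in[\lambda_{K+1},\lambda_K)$ rigorously supply the local-to-global step and the self-consistency of $\mathcal{S}$, both of which the paper leaves implicit. The one point to tighten is that a positive crossing exists only if $\beta_0D_0(0)<\lambda_1$; otherwise $R_{th}=0$ and no $K$ fits your bracket. Under that condition the right endpoint is excluded for every $R_{total}>R_{th}$, not just large ones, since $\beta_0D_0(R_{total})\le\beta_0D_0(0)=\mu(R_{th})<\lambda_1$.
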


\begin{proof}
Whether basis feedback is active is decided by the one-sided derivative at $R_0=0$: if allocating an infinitesimal basis rate decreases distortion, the optimum moves to $R_0^*>0$; otherwise $R_0^*=0$ remains optimal. The boundary condition is
$\mu(R_{\mathrm{th}})=\beta_0D_0(0).$
For a fixed active set $\mathcal{S}$ with $K=|\mathcal{S}|$, reverse water-filling gives
\[
\mu(R_q)=\left(\prod_{m\in\mathcal{S}}\lambda_m\right)^{1/K}2^{-NR_q/K},
\]
and substituting $R_q=R_{\mathrm{th}}$ yields \eqref{eq:threshold}.
\end{proof}

Intuitively, $R_{th}$ marks the point where the marginal gain from basis refinement exceeds that of coefficient quantization. Below $R_{th}$, reducing coefficient distortion is more effective, making basis feedback unnecessary, whereas above $R_{th}$, the basis-mismatch term $D_0(R_0)$ becomes the dominant bottleneck.

\begin{figure*}[t]
    \centering
    \begin{minipage}[t]{0.45\linewidth} 
        \centering
        \includegraphics[height=6cm, keepaspectratio]{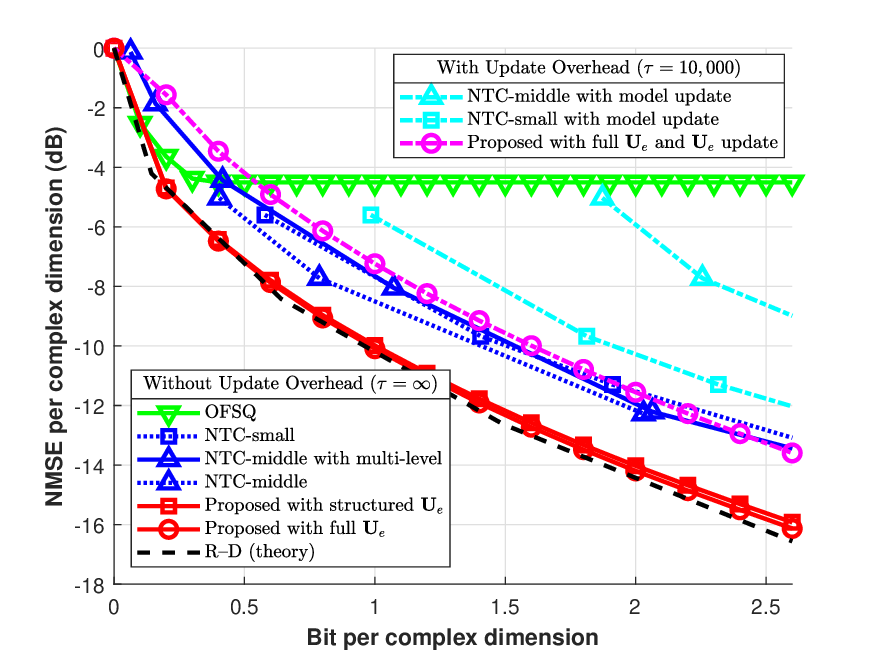}
        \caption{R-D performance comparison on the correlated Gaussian channel ($N_c=N_t=32, \rho=0.8$).}
        \label{Rayleigh_RD_result}
    \end{minipage}
    \hfill
    \begin{minipage}[t]{0.45\linewidth}
        \centering
        \includegraphics[height=6cm, keepaspectratio]{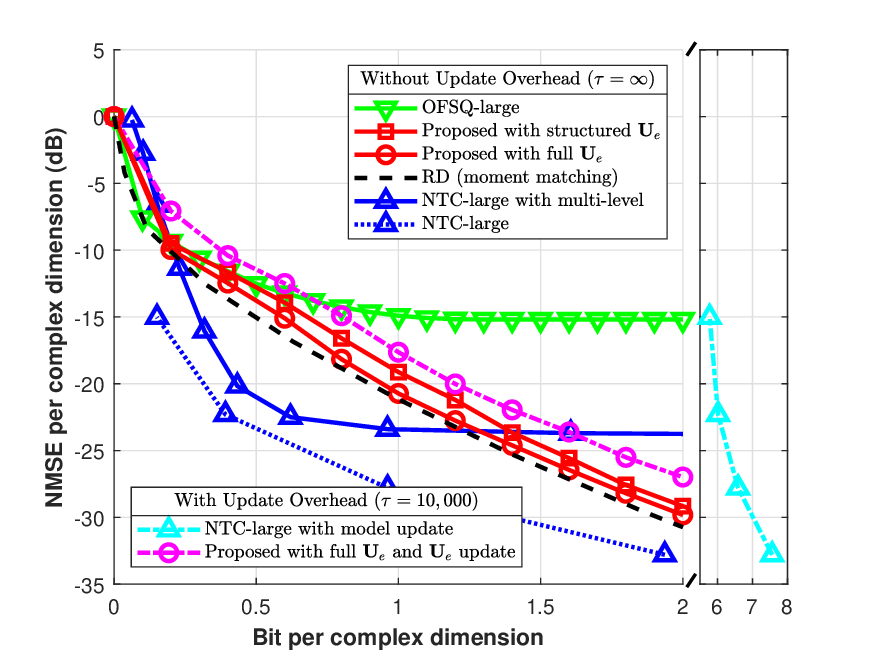}
        \caption{R-D performance comparison on the realistic COST2100 indoor \SI{5.3}{\giga\hertz} channel.}
        \label{COST2100_RD_result}
    \end{minipage}
\end{figure*}

\section{Numerical Results}
We evaluate the proposed CSI compression framework against state-of-the-art methods in a single-cell FDD massive MIMO system with $N_t=32$ antennas and $N_c=32$ subcarriers. Feedback rates are normalized by the dimension $N=N_c N_t$.

\textbf{Channel Models and Baselines:} We consider two channel environments. For theoretical validation, we generate a correlated Gaussian channel with Kronecker exponential correlation ($[\mathbf{R}_s]_{i,j}=[\mathbf{R}_f]_{i,j}=\rho^{|i-j|}$, $\rho=0.8$). For realistic evaluation, we use the COST2100 indoor \SI{5.3}{\giga\hertz} scenario~\cite{COST2100}, which yields non-Gaussian channels due to geometry-based multipath propagation. Since the exact R-D function is intractable in this case, we adopt a Gaussian moment-matched benchmark by estimating the sample covariance $\hat{\mathbf{R}}$ and using the R-D function of $\mathcal{CN}(\mathbf{0},\hat{\mathbf{R}})$ as an upper bound. Baselines include the Gaussian R-D bound, OFSQ~\cite{Liotopoulos2025OFSQ}, a learning-based scalar quantization method with ordered dropout for rate adaptation, and state-of-the-art NTC autoencoders of varying sizes, including the rate-adaptive NTC-multi-level variant~\cite{park2024multirate}.

\textbf{Implementation \& Dynamics:} For the proposed TC, we use bisection search for the RWF water-level and Lloyd--Max optimized ECSQ. We evaluate both a static regime ($\tau\to\infty$) and a dynamic regime ($\tau=10{,}000$), where the latter accounts for update overhead via $R_{\text{eff}}=R_{\text{total}}-B_{\text{update}}/N\tau$.

\textbf{Rate-Distortion Performance:} Fig.~\ref{Rayleigh_RD_result} confirms that on the correlated Gaussian channel, the proposed TC with full $\mathbf{U}$ tightly matches the theoretical bound, and the Structured-$\mathbf{U}$ variant shows negligible degradation, implying that the separable approximation captures the dominant eigenstructure effectively. On the realistic COST2100 channel (Fig.~\ref{COST2100_RD_result}), OFSQ and NTC-multi-level saturate in the high-rate regime ($>0.6$ bits/dim). In contrast, our framework continuously reduces distortion by optimally balancing basis and coefficient rates, approaching the performance of the computationally intensive NTC-large without requiring training.

\textbf{Dynamics and Complexity:} In dynamic scenarios, update overhead is critical. Adapting DL models requires retransmitting full FP32 weights ($B_{\text{update}} \approx 32\times\text{\#params}$), causing the effective rate to collapse as shown in Fig.~\ref{COST2100_RD_result}. Conversely, our framework only refreshes the lightweight quantized basis indices ($R_0$), maintaining robust performance even at $\tau=10{,}000$. In terms of complexity, the separable TC scheme requires only $2{,}048$ parameters and approximately $1.31\times 10^5$ FLOPs per block, whereas the state-of-the-art NTC-large demands over $2.1$ million parameters and $\approx 8.0\times 10^8$ FLOPs---corresponding to an $878\times$ reduction in parameters and a $6{,}000\times$ reduction in FLOPs, making the proposed scheme uniquely feasible for resource-constrained UEs.

\textbf{Discussion:} The proposed framework is expected to perform well in practical channels whose statistics are well captured by second-order structure, as is often the case in rich-scattering propagation environments. In such scenarios, the Gaussian-based analysis provides an accurate approximation, and the resulting rate-split rule remains effective.

\section{Conclusion}
This paper investigated FDD massive MIMO CSI feedback via a TC R-D analysis, modeling basis quantization with an RVQ surrogate for Grassmannian quantization. We derived a closed-form E2E distortion expression and an optimal rate-splitting rule with a phase-transition threshold for activating basis feedback. Simulations confirm near-optimal performance, robustness to update overhead, and orders-of-magnitude complexity reduction over DL methods.

\section*{ACKNOWLEDGMENT}
This work was supported by the National Research Foundation of Korea (NRF) Grant funded by the Korea government (MSIT) under Grant 2022R1A5A1027646, and by the Korea Research Institute for defense Technology planning and advancement (KRIT) - grant funded by the Defense Acquisition Program Administration (DAPA) (KRIT-CT-22-078).
\newpage
\bibliographystyle{IEEEtran}
\bibliography{bibliography}

\end{document}